\documentclass{llncs}

\frontmatter          
\pagestyle{headings}  
%

%
\mainmatter              

\title{Speeding up shortest path algorithms}

\author{%
  Andrej Brodnik\inst{1}~\inst{2} \and
  Marko Grgurovi\v{c}\inst{1}%
}

\authorrunning{Andrej Brodnik, Marko Grgurovi\v{c}} 
%
\tocauthor{Andrej Brodnik, Marko Grgurovi\v{c}}
\institute{University of Primorska, Department of Information Science and Technology, Slovenia, \\
\email{andrej.brodnik@upr.si, marko.grgurovic@student.upr.si}
\and
University of Ljubljana, Faculty of Computer and Information Science, Slovenia,
}

\usepackage{makeidx}
\usepackage{fancyhdr,appendix,amssymb, latexsym,enumerate,hyperref,tkz-graph,algpseudocode, float, algorithm, graphicx, longtable, verbatim}
\usepackage[english]{babel}
\usetikzlibrary{arrows}

\begin{document}
\maketitle
\begin{abstract}
Given an arbitrary, non-negatively weighted, directed graph $G=(V,E)$ we present an algorithm that computes all pairs shortest paths in time $\mathcal{O}(m^* n + m \lg n + nT_\psi(m^*, n))$, where $m^*$ is the number of different edges contained in shortest paths and $T_\psi(m^*, n)$ is a running time of an algorithm to solve a single-source shortest path problem (SSSP). This is a substantial improvement over a trivial $n$ times application of $\psi$ that runs in $\mathcal{O}(nT_\psi(m,n))$. In our algorithm we use $\psi$ as a black box and hence any improvement on $\psi$ results also in improvement of our algorithm.

Furthermore, a combination of our method, Johnson's reweighting technique and topological sorting results in an $\mathcal{O}(m^*n + m \lg n)$ all-pairs shortest path algorithm for arbitrarily-weighted directed acyclic graphs.

In addition, we also point out a connection between the complexity of a certain sorting problem defined on shortest paths and SSSP.
\keywords{all pairs shortest path, single source shortest path}

\end{abstract}
\section{Introduction}
Let $G=(V,E)$ denote a directed graph where $E$ is the set of edges and $V$ is the set of vertices of the graph and let $\ell(\cdot)$ be a function mapping each edge to its length. Without loss of generality, we assume $G$ is strongly connected. To simplify notation, we define $m = |E|$ and $n = |V|$. Furthermore, we define $d(u,v)$ for two vertices $u,v \in V$ as the length of the shortest path from $u$ to $v$. A classic problem in algorithmic graph theory is to find shortest paths. Two of the most common variants of the problem are the single-source shortest path (SSSP) problem and the all-pairs shortest path problem (APSP). In the SSSP variant, we are asked to find the path with the least total length from a fixed vertex $s \in V$ to every other vertex in the graph. Similarly, the APSP problem asks for the shortest path between every pair of vertices $u,v \in V$. A common simplification of the problem constrains the edge length function to be non-negative, i.e. $\ell: E \rightarrow \mathbb{R}^+$, which we assume throughout the rest of the paper, except where explicitly stated otherwise. Additionally, we define $\forall (u,v) \notin E: \ell(u,v) = \infty$.

It is obvious that the APSP problem can be solved by $n$ calls to an SSSP algorithm. Let us denote the SSSP algorithm as $\psi$. We can quantify the asymptotic time bound of such an APSP algorithm as $\mathcal{O}(nT_{\psi}(m,n))$ and the asymptotic space bound as $\mathcal{O}(S_{\psi}(m,n))$, where $T_{\psi}(m,n)$ is the time required by algorithm $\psi$ and $S_{\psi}(m,n)$ is the space requirement of the same algorithm. We assume that the time and space bounds can be written as functions of $m$ and $n$ only, even though this is not necessarily the case in more ``exotic'' algorithms that depend on other parameters of $G$. Note, that if we are required to store the computed distance matrix, then we will need at least $\Theta(n^2)$ additional space. If we account for this, then the space bound becomes $\mathcal{O}(S_{\psi}(m,n) + n^2)$.

In this paper we are interested in the following problem: what is the best way to make use of an SSSP algorithm $\psi$ when solving APSP? There exists some prior work on a very similar subject in the form of an algorithm named the Hidden Paths Algorithm~\cite{Karger+al:93}. The Hidden Paths Algorithm is essentially a modification of Dijkstra's algorithm~\cite{dijkstra} to make it more efficient when solving APSP. Solving the APSP problem by repeated calls to Dijkstra's algorithm requires $\mathcal{O}(mn + n^2 \lg n)$ time using Fibonacci heaps~\cite{fib}. The Hidden Paths Algorithm then reduces the running time to $\mathcal{O}(m^*n + n^2 \lg n)$. The quantity $m^*$ represents the number of edges $(u,v) \in E$ such that $(u,v)$ is included in at least one shortest path. In the Hidden Paths Algorithm this is accomplished by modifying Dijkstra's algorithm, so that it essentially runs in parallel from all vertex sources in $G$, and then reusing the computations performed by other vertices. The idea is simple: we can delay the inclusion of an edge $(u,v)$ as a candidate for forming shortest paths until vertex $u$ has found $(u,v)$ to be the shortest path to $v$. However, the Hidden Paths Algorithm is limited to Dijkstra's algorithm, since it explicitly sorts the shortest path lists by path lengths, through the use of a priority queue. As a related algorithm, we also point out that a different measure $|UP|$ related to the number of so-called uniform paths has also been exploited to yield faster algorithms~\cite{DemetrescuI06}.

In Sections~\ref{sec:algo},~\ref{sec:improv} and~\ref{sec:dags} we show that there is a method for solving APSP which produces the shortest path lists of individual vertices in sorted order according to the path lengths. The interesting part is that it can accomplish this without the use of priority queues of any form and requires only an SSSP algorithm to be provided. This avoidance of priority queues permits us to state a time complexity relationship between a sorted variant of APSP and SSSP. Since it is very difficult to prove meaningful lower bounds for SSSP, we believe this connection might prove useful.

As a direct application of our approach, we show that an algorithm with a similar time bound to the Hidden Paths Algorithm can be obtained. Unlike the Hidden Paths Algorithm, the resulting method is general in that it works for any SSSP algorithm, effectively providing a speed-up for arbitrary SSSP algorithms. The proposed method, given an SSSP algorithm $\psi$, has an asymptotic worst-case running time of $\mathcal{O}(m^* n + m \lg n + nT_{\psi}(m^*, n))$ and space $\mathcal{O}(S_{\psi}(m,n) + n^2)$. We point out that the $m^*n$ term is dominated by the $nT_{\psi}(m^*, n)$ term, but we feel that stating the complexity in this (redundant) form makes the result clearer to the reader. For the case of $\psi$ being Dijkstra's algorithm, this is asymptotically equivalent to the Hidden Paths Algorithm. However, since the algorithm $\psi$ is arbitrary, we show that the combination of our method, Johnson's reweighting technique~\cite{johnson} and topological sorting gives an $\mathcal{O}(m^*n + m \lg n)$ APSP algorithm for arbitrarily-weighted directed acyclic graphs.

\section{Preliminaries}
Throughout the paper and without loss of generality, we assume that we are not interested in paths beginning in $v$ and returning back to $v$. We have previously defined the edge length function $\ell(\cdot)$, which we now extend to the case of paths. Thus, for a path $\pi$, we write $\ell(\pi)$ to denote its length, which corresponds to the sum of the length of its edges.

Similar to the way shortest paths are discovered in Dijkstra's algorithm, we rank shortest paths in nondecreasing order of their lengths. Thus, we call a path $\pi$ the $k$-th shortest path if it is at position $k$ in the length-sorted shortest path list. The list of paths is typically taken to be from a single source to variable target vertices. In contrast, we store paths from variable sources to a single target. By reversing the edge directions we obtain the same lists, but it is conceptually simpler to consider the modified case. Thus, the $k$-th shortest path of vertex $v$ actually represents the $k$-th shortest incoming path into $v$. We will now prove a theorem on the structure of shortest paths, which is the cornerstone of the proposed algorithm.
\begin{definition}
\label{def:ordpath}
\textbf{(Ordered shortest path list $P_v$)}\\
Let $P_v = ( \pi_1, \pi_2, ..., \pi_{n-1} )$ denote the shortest path list for each vertex $v \in V$. Then, let $P_{v,k}$ denote the $k$-th element in the list $P_v$. The shortest path lists are ordered according to path lengths, thus we have $\forall i,j: 0 < i < j < n \Rightarrow \ell(\pi_i) \leq \ell(\pi_j)$.
\end{definition}
\begin{theorem}
\label{thm:sortedpath}
To determine $P_{v,k}$ we only need to know every edge $\{(u,v) \in E$ $|$ $\forall u \in V\}$ and the first $k$ elements of each list $P_u$, where $(u,v) \in E$.
\end{theorem}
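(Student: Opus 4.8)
The plan is to exploit the optimal-substructure property of shortest paths together with a counting (exchange) argument. First I would observe that every path recorded in $P_v$ reaches $v$ through some final edge $(u,v)\in E$, and that deleting this edge leaves a shortest incoming path into $u$; hence each entry of $P_v$ has the form $\pi = \rho\cdot(u,v)$, where $\rho$ is an entry of $P_u$ and $\ell(\pi)=\ell(\rho)+\ell(u,v)$. This already confines every candidate for $P_{v,k}$ to the set $\{\,\rho\cdot(u,v) : (u,v)\in E,\ \rho\in P_u\,\}$, so the only thing left to establish is how deep into each list $P_u$ we must look.

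The heart of the proof is the claim that the prefix $\rho$ of $\pi=P_{v,k}$ occupies a position at most $k$ in its list $P_u$. I would argue this by contradiction: suppose $\rho$ sat at some position $r>k$ in $P_u$. Then the $k$ entries $P_{u,1},\dots,P_{u,k}$ all precede $\rho$, so each has length at most $\ell(\rho)$ and, since each source contributes exactly one entry to a list, each originates from a source distinct from that of $\rho$. Appending the edge $(u,v)$ to each of these $k$ entries produces, for $k$ pairwise-distinct sources, incoming paths into $v$ of length at most $\ell(\rho)+\ell(u,v)=\ell(\pi)$. The corresponding entries of $P_v$ therefore all have length at most $\ell(\pi)$ and, because their sources also differ from that of $\pi$, are genuinely distinct from $\pi$. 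This exhibits at least $k$ entries of $P_v$ ordered ahead of $\pi$, forcing $\pi$ to position at least $k+1$ and contradicting $\pi=P_{v,k}$.

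Finally, applying this bound not merely to $P_{v,k}$ but to every $P_{v,j}$ with $j\le k$ shows that each of the first $k$ entries of $P_v$ arises by extending some entry among the first $k$ entries of an appropriate list $P_u$. Consequently, knowing the incoming edges $\{(u,v)\in E\}$ together with the prefixes $P_{u,1},\dots,P_{u,k}$ suffices to enumerate all relevant candidates, sort them by length (deduplicating by source), and read off $P_{v,k}$. Thus $P_{v,k}$ is a function of exactly the data named in the statement.

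I expect the main obstacle to lie in the careful treatment of ties, since the ordering in Definition~\ref{def:ordpath} is stated with non-strict inequalities and the counting argument needs each of the $k$ extended paths to be ordered \emph{strictly} before $\pi$. To make this watertight I would fix a single tie-breaking rule applied uniformly across all lists — for instance, ordering equal-length paths by source index — and check that it propagates under edge extension: if $P_{u,i}$ precedes $\rho$ with equal length, their sources satisfy $s_i < s$, and adding the common edge $(u,v)$ preserves both the length tie and the source comparison, so the extended entry still precedes $\pi$ in $P_v$. Verifying this propagation, and confirming that distinct sources yield genuinely distinct list entries, is the delicate part; the optimal-substructure step guaranteeing that $\rho$ appears in $P_u$ at all is routine by comparison.
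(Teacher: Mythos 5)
Your proposal is correct and follows essentially the same route as the paper: decompose $P_{v,k}$ as $\rho\cdot(u,v)$ with $\rho\in P_u$, then derive a contradiction from the observation that the $k$ distinct sources heading the list $P_u$ would force $k$ entries of $P_v$ ahead of $\pi$, pushing it past position $k$. Your explicit treatment of ties via a uniform, extension-compatible tie-breaking rule is a welcome refinement of a point the paper's own counting argument leaves implicit.
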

\begin{proof}
We assume that we have found the first $k$ shortest paths for all neighbors of $v$, and are now looking for the $k$-th shortest path into $v$, which we denote as $\pi_{k}$. There are two possibilities: either $\pi_{k}$ is simply an edge $(u,v)$, in which case we already have the relevant information, or it is the concatenation of some path $\pi$ and an edge $(u,v)$. The next step is to show that $\pi$ is already contained in $P_{u,i}$ where $i \leq k$.

We will prove this by contradiction. Assume the contrary, that $\pi$ is either not included in $P_u$, or is included at position $i>k$. This would imply the existence of some path $\pi'$ for which $\ell(\pi') \leq \ell(\pi)$ and which is contained in $P_u$ at position $i \leq k$. Then we could simply take $\pi_{k}$ to be the concatenation of $(u,v)$ and $\pi'$, thereby obtaining a shorter path than the concatenation of $(u,v)$ and $\pi$. However, this is not yet sufficient for a contradiction. Note that we may obtain a path that is shorter, but connects vertices that have an even shorter path between them, i.e. the path is not the shortest path between the source $s$ and target $v$.

To show that it does contradict our initial assumption, we point out that $P_u$ contains $k$ shortest paths, therefore it contains shortest paths from $k$ unique sources. In contrast, the list $P_{v}$ contains at most $k-1$ shortest paths. By a counting argument we have that there must exist a path $\pi'$, stored in $P_u$ with an index $i \leq k$, which originates from a source vertex $s$ that is not contained in $P_{v}$, thereby obtaining a contradiction. \qed
\end{proof}

\section{The algorithm}
\label{sec:algo}
Suppose we have an SSSP algorithm $\psi$ and we can call it using $\psi(V, E, s)$ where $V$ and $E$ correspond to the vertex and edge sets, respectively and $s$ corresponds to the source vertex. The method we propose works in the fundamental comparison-addition model and does not assume a specific kind of edge length function, except the requirement that it is non-negative. However, the algorithm $\psi$ that is invoked can be arbitrary, so if $\psi$ requires a different model or a specific length function, then implicitly by using $\psi$, our algorithm does as well. 

First we give a simpler variant of the algorithm, resulting in bounds $\mathcal{O}(mn + nT_{\psi}(m^*, n))$. We limit our interaction with $\psi$ only to execution and reading its output. To improve the running time we construct a graph $G' = (V',E')$ on which we run $\psi$. There are two processes involved: the method for solving APSP which runs on $G$, and the SSSP algorithm $\psi$ which runs on $G'$. Let $n' = |V'|$ and $m' = |E'|$. We will maintain $m' \leq m^* + n$ and $n' = n + 1$ throughout the execution. There are $n-1$ phases of the main algorithm, each composed of three steps: (1) Prepare the graph $G'$; (2) Run $\psi$ on $G'$; and (3) Interpret the results of $\psi$.

Although the proposed algorithm effectively works on $n-1$ new graphs, these graphs are similar to one another. Thus, we can consider the algorithm to work only on a single graph $G'$, with the ability to modify edge lengths and introduce new edges into $G'$. Initially we define $V' = V \cup \{i\}$, where $i$ is a new vertex unrelated to the graph $G$. We create $n$ new edges from $i$ to every vertex $v \in V$, i.e. $E' = \bigcup_{v \in V} \{(i,v)\}$. We set the cost of these edges to some arbitrary value in the beginning.

\begin{definition} \emph{(Shortest path list for vertex $v$, $S_v$)}
\label{def:splist}
The shortest path list of some vertex $v \in V$ is denoted by $S_v$. The length of $S_v$ is at most $n+1$ and contains pairs of the form $(a,\delta)$ where $a \in V \cup \{null\}$ and $\delta \in \mathbb{R^+}$. The first element of $S_v$ is always $(v,0)$, the last element plays the role of a sentinel and is always $(null, \infty)$. For all inner (between the first and the last element) elements $(a,\delta)$, we require that $\delta=d(a,v)$. A list with $k\leq n-1$ inner elements:
\[S_v = \big( (v,0), (a_1, \delta_1), (a_2, \delta_2),...,(a_k, \delta_k), (null, \infty) \big).\]
\end{definition}

Next we describe the data structures. Each vertex $v \in V$ keeps its shortest path list $S_v$, which initially contains only two pairs $(v,0)$ and $(null, \infty)$. For each edge $(u,v) \in E$, vertex $v$ keeps a pointer $p[(u,v)]$, which points to some element in the shortest path list $S_u$. Initially, each such pointer $p[(u,v)]$ is set to point to the first element of $S_u$.

\begin{definition} \emph{(Viable pair for vertex $v$)}
\label{def:viabpair}
A pair $(a,\delta)$ is viable for a vertex $v \in V$ if $\forall (a',\delta') \in S_v: a \neq a'$. Alternatively, if $a = null$ we define the pair as viable.
\end{definition}

\begin{definition} \emph{(Currently best pair for vertex $v$, $(a_v, \delta_v)$)}
\label{def:bestpair}
A pair $(a_v,\delta_v) \in S_w$, where $(w,v) \in E$ is the currently best pair for vertex $v$ if and only if $(a_v,\delta_v)$ is viable for $v$ and: $\forall (u,v) \in E: \forall (a',\delta') \in S_u: (a',\delta')$ viable for $v$ and $\delta'+\ell(u,v) \geq \delta_v + \ell(w,v)$.
\end{definition}

We now look at the first step taken in each phase of the algorithm: preparation of the graph $G'$. In this step, each vertex $v$ finds the currently best pair $(a_v, \delta_v)$. To determine the currently best pair, a vertex $v$ inspects the elements pointed to by its pointers $p[(u,v)]$ for each $(u,v) \in E$ in the following manner: For each pointer $p[(u,v)]$, vertex $v$ keeps moving the pointer to the next element in the list $S_u$ until it reaches a viable pair, and takes the minimum amongst these as per Definition~\ref{def:bestpair}. We call this process \textit{reloading}.

Once reloaded we modify the edges in the graph $G'$. Let $(a_v,\delta_v) \in S_w$ where $(w,v) \in E$ be the currently best pair for vertex $v$, then we set $\ell(i,v) \gets \delta_v+\ell(w,v)$. Now we call $\psi(V', E', i)$. Suppose the SSSP algorithm returns an array $\Pi[~]$ of length $n$. Let each element $\Pi[v]$ be a pair $(c,\delta)$ where $\delta$ is the length of the shortest path from $i$ to $v$, and $c$ is the first vertex encountered on this path. When determining the first vertex on the path we exclude $i$, i.e. if the path is $\pi_v = \{(i,v)\}$ then $\Pi[v].c=v$. The inclusion of the first encountered vertex is a mere convenience, and can otherwise easily be accomodated by examining the shortest path tree returned by the algorithm. For each vertex $v \in V$ we append the pair $(a_{\Pi[v].c}, \Pi[v].\delta)$ to its shortest path list. Note, that the edges $(i,v) \in E'$ are essentially shorthands for paths in $G$. Thus, $a_{\Pi[v].c}$ represents the source of the path in $G$. We call this process \textit{propagation}.

After propagation, we modify the graph $G'$ as follows. For each vertex $v \in V$ such that $\Pi[v].c = v$, we check whether the currently best pair $(a_v,\delta_v) \in S_u$ that was selected during the reloading phase is the first element of the list $S_u$. If it is the first element, then we add the edge $(u,v)$ into the set $E'$. This concludes the description of the algorithm. We formalize the procedure in pseudocode and obtain Algorithm~\ref{alg:apsp}. To see why the algorithm correctly computes the shortest paths, we prove the following two lemmata.
\begin{algorithm}
\caption{All-pairs shortest path}
\begin{algorithmic}[1]
\Procedure{APSP}{$V, E, \psi$}
	\State $V' := V \cup \{i\}$
	\State $E' := \bigcup_{\forall v \in V} \{(i,v)\}$
	\State $best[~] :=$ new array $[n]$ of pairs $(a,\delta)$
	\State $solved[~][~] :=$ new array $[n][n]$ of boolean values
	\State Initialize $solved[~][~]$ to $false$
	\ForAll{$v \in V$}
		\State $S_v.append($ $(v, 0)$ $)$
	\EndFor
	\For{$k:=1$ to $n-1$}
		\ForAll{$v \in V$} \Comment{Reloading}
			\State $best[v] := (null, \infty)$
			\ForAll{$u \in V$ s.t. $(u,v) \in E$}
				\While{$solved[v][p[(u,v)].a]$}
					\State $p[(u,v)].next()$ \Comment{An end-of-list element is always viable}
				\EndWhile
				\If{$p[(u,v)].\delta + \ell(u,v) < best[v].\delta$}
					\State $best[v].a := p[(u,v)].a$
					\State $best[v].\delta := p[(u,v)].\delta + \ell(u,v)$
				\EndIf
			\EndFor
			\State $\ell(i,v) := best[v].\delta$ \Comment{Considering only $k-1$ neighboring paths}
		\EndFor
		\State $\Pi[~] :=$ \Call{$\psi$}{$V',E',i$}
		\ForAll{$v \in V$} \Comment{Propagation}
			\State $S_v.append($ $(best[\Pi[v].c].a,$ $\Pi[v].\delta)$ $)$
			\State $solved[v][best[\Pi[v].c].a] := true$
			\If{$\Pi[v].c = v$ and $best[v]$ was the first element of some list $S_u$}
				\State $E' := E' \cup {(u,v)}$
			\EndIf
		\EndFor
	\EndFor
\EndProcedure
\end{algorithmic}
\label{alg:apsp}
\end{algorithm}
\begin{lemma}
\label{lem:preservation}
For each vertex $v \in V$ whose $k$-th shortest path was found during the reloading step, $\psi(V',E',i)$ finds the edge $(i,v)$ to be the shortest path into $v$.
\end{lemma}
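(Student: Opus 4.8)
The plan is to prove the equivalent quantitative statement $d_{G'}(i,v)=\ell(i,v)$, i.e.\ that no walk from $i$ to $v$ in $G'$ is shorter than the single edge $(i,v)$; since that edge has length exactly $\ell(i,v)$, it is then a shortest path from $i$ to $v$ and $\psi$ reports $\Pi[v].c=v$. The vertex $i$ has no incoming edges, so every walk from $i$ to $v$ begins with one edge $(i,x)$ and thereafter uses only edges that were previously added to $E'$. By construction those added edges are genuine edges of $G$, so the portion of the walk after $x$ is a walk in $G$ from $x$ to $v$ and has length at least $d(x,v)$. Consequently
\[ d_{G'}(i,v) \;=\; \min_{x\in V}\big(\ell(i,x)+d_{G'}(x,v)\big) \;\ge\; \min_{x\in V}\big(\ell(i,x)+d(x,v)\big), \]
and it suffices to bound every term $\ell(i,x)+d(x,v)$ below by $\ell(P_{v,k})$, which equals $\ell(i,v)$ precisely because $v$'s $k$-th shortest path was found during reloading.

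The first ingredient is that each reloaded length overestimates the $k$-th shortest path into its target, namely $\ell(i,x)\ge \ell(P_{x,k})$ for all $x$. Assuming, as part of the induction on $k$, that at the start of phase $k$ the list $S_x$ correctly holds $x$ together with the sources of $P_{x,1},\dots,P_{x,k-1}$ (the $k-1$ closest sources), any source $a$ that is viable for $x$ must realize some $P_{x,j}$ with $j\ge k$ and hence satisfies $d(a,x)\ge \ell(P_{x,k})$. By Definition~\ref{def:bestpair} the value $\ell(i,x)$ is the length $d(a,w)+\ell(w,x)$ of a concrete walk from such a viable $a$ to $x$, so $\ell(i,x)\ge d(a,x)\ge \ell(P_{x,k})$ (and $\ell(i,x)=\infty$ trivially satisfies this when no viable pair exists).

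The second and central ingredient is a triangle-type inequality for $k$-th shortest distances, $\ell(P_{v,k})\le \ell(P_{x,k})+d(x,v)$ for all $x,v$, which I would establish by a counting argument in the style of Theorem~\ref{thm:sortedpath}. The $k$ sources realizing $P_{x,1},\dots,P_{x,k}$ are distinct and different from $x$, and together with $x$ itself give $k+1$ vertices $s$ with $d(s,x)\le \ell(P_{x,k})$; discarding $v$ if it appears among them leaves at least $k$ vertices $s\neq v$ with $d(s,x)\le \ell(P_{x,k})$. Each such $s$ satisfies $d(s,v)\le d(s,x)+d(x,v)\le \ell(P_{x,k})+d(x,v)$, so at least $k$ distinct sources reach $v$ within distance $\ell(P_{x,k})+d(x,v)$; the $k$-th smallest incoming distance to $v$, which is $\ell(P_{v,k})$, is therefore at most $\ell(P_{x,k})+d(x,v)$.

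Combining the two ingredients yields $\ell(i,x)+d(x,v)\ge \ell(P_{x,k})+d(x,v)\ge \ell(P_{v,k})$ for every $x$, hence $d_{G'}(i,v)\ge \ell(P_{v,k})=\ell(i,v)$; since the direct edge attains this bound, it is a shortest path into $v$, which is the claim. I expect the counting step to be the main obstacle: one must track distinct source \emph{vertices} rather than distance \emph{values} so that ties are handled correctly, and must keep the $S_x$-bookkeeping consistent with the phase index $k$, since it is exactly this off-by-one-sensitive count that forces a viable source into position $\ge k$ and thereby guarantees $d(a,x)\ge \ell(P_{x,k})$.
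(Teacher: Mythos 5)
Your proof is correct, but it takes a genuinely different route from the paper's. The paper argues by contradiction at the level of the data structures: it supposes $\psi$ finds a path shorter than the edge $(i,v)$, reduces (somewhat informally, via a ``without loss of generality'') to the two-edge case $\pi'=\{(i,u),(u,v)\}$, and then shows that $\ell(\pi')<\ell(i,v)$ forces every source stored in $S_u$ to already appear in $S_v$, whence by a cardinality argument the two source sets coincide and the expanded path's source is viable for $v$ after all --- contradicting the choice made during reloading. You instead prove the clean quantitative statement $d_{G'}(i,v)\ge\ell(i,v)$ by decomposing any $i$--$v$ walk as an edge $(i,x)$ followed by a walk in $G$, and combining two inequalities: the overestimate property $\ell(i,x)\ge\ell(P_{x,k})$ (which, like the paper's argument, rests on the inductive hypothesis that each $S_x$ holds the $k-1$ closest sources at the start of phase $k$), and a triangle-type inequality $\ell(P_{v,k})\le\ell(P_{x,k})+d(x,v)$ for $k$-th shortest distances, established by a source-counting argument in the spirit of Theorem~\ref{thm:sortedpath}. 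Your version buys two things: it handles detours of arbitrary length in $G'$ uniformly, with no reduction to the two-edge case, and it isolates the $k$-th-shortest-distance triangle inequality as a reusable, self-contained fact. The paper's version is shorter and stays closer to the viability bookkeeping that the algorithm actually manipulates. Both arguments share the same logical skeleton (induction on $k$ with the correctness of the lists as the hypothesis), so neither is more general in substance; yours is the more rigorous write-up of essentially the same underlying counting phenomenon.
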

\begin{proof}
For the case when the $k$-th shortest path depends only on a path at position $j<k$ in a neighbor's list, the path is already found during the reloading step. What has to be shown is that this is preserved after the execution of the SSSP algorithm. Consider a vertex $v \in V$ which has already found the $k$-th shortest path during the reloading step. This path is represented by the edge $(i,v)$ of the same length as the $k$-th shortest path. Now consider the case that some path, other than the edge $(i,v)$ itself, would be found to be a better path to $v$ by the SSSP algorithm. Since each of the outgoing edges of $i$ represents a path in $G$, this would mean that taking this path and adding the remaining edges used to reach $v$ would consistute a shorter path than the $k$-th shortest path of $v$. Let us denote the path obtained by this construction as $\pi'$. Clearly this is a contradiction unless $\pi'$ is not the $k$-th shortest path, i.e. a shorter path connecting the two vertices is already known.

Without loss of generality, assume that $\pi' = \{(i,u),(u,v)\}$. However, $\ell(\pi')$ can only be shorter than $\ell(i,v)$ if $v$ could not find a viable (non-$null$) pair in the list $S_u$, since otherwise a shorter path would have been chosen in the reloading phase. This means that all vertex sources (the $a$ component of a pair) contained in the list $S_u$ are also contained in the list $S_v$. Therefore a viable pair for $u$ must also be a viable pair for $v$. This concludes the proof by contradiction, since the path obtained is indeed the shortest path between the two vertices. \qed
\end{proof}

\begin{lemma}
\label{lem:kthpath}
$\psi(V',E',i)$ correctly computes the $k$-th shortest paths for all vertices $v \in V$ given only $k-1$ shortest paths for each vertex.
\end{lemma}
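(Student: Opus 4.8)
The plan is to prove the claim by induction on the phase index $k$, carrying the hypothesis that after phase $k-1$ every list $S_v$ correctly holds the first $k-1$ shortest incoming paths of $v$ (in the sense of Definition~\ref{def:splist}) and that every shortest-path edge appearing as a rank-$<k$ path has already been inserted into $E'$. For the inductive step I would first apply Theorem~\ref{thm:sortedpath} to fix the shape of the target path: the $k$-th shortest incoming path $\pi_k$ into $v$ is either a single edge $(u,v)$, or a concatenation $\pi\cdot(u,v)$ where $\pi = P_{u,j}$ for some $j\le k$. This splits the analysis into the case $j\le k-1$, where the needed prefix is already present in a neighbor's list, and the case $j=k$, where $v$'s $k$-th path depends on a path being produced in the very same phase.

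The case $j \le k-1$ I would dispatch using the reloading step together with Lemma~\ref{lem:preservation}. Reloading sets $\ell(i,v)$ to the length of the shortest new-source (viable, per Definition~\ref{def:viabpair}) extension built from rank-$\le k-1$ neighbor paths, which is exactly $\ell(\pi_k)$ in this case; Lemma~\ref{lem:preservation} then guarantees $\psi$ returns the direct edge $(i,v)$ as the shortest path, so propagation appends the correct pair with $\Pi[v].c=v$. I would also check here that the source recovered as $best[\Pi[v].c].a$ matches the source of $\pi_k$, which is immediate since the entry candidate is the path itself.

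The substantive case is $j=k$. Here I would follow the dependency chain $v \leftarrow u \leftarrow \cdots$ obtained by repeatedly stripping the last edge of each rank-$k$ path, observing that along such a chain the prefix lengths strictly decrease whenever a genuine rank-$k$ dependency is present (ties with a zero-length prefix reduce to the $j\le k-1$ case and are handled by reloading alone). The chain therefore bottoms out at an entry vertex whose relevant path is of rank $\le k-1$ or a single edge, i.e. at a candidate encoded by some edge $(i,\cdot)$. The key structural point is that every edge used to climb this chain, say $(u,v)$, is a shortest-path edge of rank strictly below $k$ and hence, by the induction hypothesis and the edge-insertion rule (triggered when $\Pi[v].c=v$ and $best[v]$ is the first element $(u,0)$ of some $S_u$), already lies in $E'$ before $\psi$ is invoked in phase $k$. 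I would then argue the two-sided bound $d_{G'}(i,v)=\ell(\pi_k)$: the upper bound by exhibiting the explicit walk in $G'$ that enters at the chain's base via an $i$-edge and climbs the real edges to $v$, and the lower bound by noting that any $i$-to-$v$ path in $G'$ realizes a genuine path in $G$ whose source is viable for $v$ and therefore cannot be shorter than $\pi_k$. Source tracking is again handled by $\Pi[v].c$ identifying the first real vertex and $best[\Pi[v].c].a$ recovering the source that propagates unchanged along the chain.

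The main obstacle I expect is the interaction between the across-phase edge-insertion guarantee and the soundness of the $i$-edge encoding: I must show simultaneously that (a) all real edges required to realize $\pi_k$ are inserted in strictly earlier phases — which forces the careful treatment of zero-length prefixes and ties noted above, since an edge inserted only during propagation of phase $k$ would be unavailable to the phase-$k$ call of $\psi$ — and (b) that no spurious $i$-to-$v$ walk in $G'$ can undercut $\ell(\pi_k)$ by reconstructing an already-found source. Part (b) rests on the viability bookkeeping maintained by $solved[~][~]$ and a counting argument of the same flavor as the one closing the proof of Theorem~\ref{thm:sortedpath}, which I would adapt to the chained setting.
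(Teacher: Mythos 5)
Your proposal follows essentially the same route as the paper: the case of a rank-$<k$ dependency is delegated to the reloading step and Lemma~\ref{lem:preservation}, the rank-$k$ dependency is resolved by unwinding the chain of same-phase dependencies until it bottoms out in something available at reloading time, and the absence of spurious shorter answers is argued via viability of the source recovered from any $i$-to-$v$ path in $G'$. You are in fact more explicit than the paper on two points it leaves implicit --- that the real edges climbed by the chain must already have been inserted into $E'$ in earlier phases, and the treatment of zero-length prefixes where the ``path becomes shorter'' argument degenerates into ties --- so the proposal matches the paper's proof and, if anything, fills in some of its gaps.
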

\begin{proof}
The case when the $k$-th path requires only $k-1$ neighboring paths to be known has already been proven by the proof of Lemma~\ref{lem:preservation}. We now consider the case when the $k$-th path depends on a neighbor's $k$-th path. If the $k$-th path of vertex $v$ requires the $k$-th path from the list of its neighbor $u$, then we know the $k$-th path of $u$ must be the same as that of $v$ except for the inclusion of the edge $(u,v)$. The same argument applies to the dependency of vertex $u$ on its neighbor's list. Thus, the path becomes shorter after each such dependency, eventually becoming dependent on a path included at position $j<k$ in a neighbor's list (this includes edges), which has already been found during the reloading step and is preserved as the shortest path due to Lemma~\ref{lem:preservation}.

We now proceed in the same way that we obtained the contradiction in the proof of Lemma~\ref{lem:preservation}, except it is not a contradiction in this case. What follows is that any path from $i$ to $v$ in $G'$ which is shorter than $\ell(i,v)$ must represent a viable pair for $v$. It is easy to see, then, that the shortest among these paths is the $k$-th shortest path for $v$ in $G$ and also the shortest path from $i$ to $v$ in $G'$. \qed
\end{proof}
\subsection{Time and space complexity}
First, we look at the time complexity. The main loop of Algorithm~\ref{alg:apsp} (lines $7$--$29$) performs $n-1$ iterations. The reloading loop (lines $8$--$20$) considers each edge $(u,v) \in E$ which takes $m$ steps. This amounts to $\mathcal{O}(mn)$. Since each shortest path list is of length $n+1$, each pointer is moved to the next element $n$ times over the execution of the algorithm. There are $m$ pointers, so this amounts to $\mathcal{O}(mn)$. Algorithm $\psi$ is executed $n-1$ times. In total, the running time of Algorithm~\ref{alg:apsp} is $\mathcal{O}(mn + nT_\psi(m^*,n))$.

The space complexity of Algorithm~\ref{alg:apsp} is as follows. Each vertex keeps track of its shortest path list, which is of size $n+1$ and amounts to $\Theta(n^2)$ space over all vertices. Since there are exactly $m$ pointers in total, the space needed for them is simply $\mathcal{O}(m)$. On top of the costs mentioned, we require as much space as is required by algorithm $\psi$. In total, the combined space complexity for Algorithm~\ref{alg:apsp} is $\mathcal{O}(n^2 + S_\psi(m^*,n))$.
\subsection{Implications}
We will show how to further improve the time complexity of the algorithm in Section~\ref{sec:improv}, but already at its current stage, the algorithm reveals an interesting relationship between the complexity of non-negative SSSP and a stricter variant of APSP.
\begin{definition}
\label{def:sapsp}
\textbf{(Sorted all-pairs shortest path $\mathit{SAPSP}$)}\\
The problem $\mathit{SAPSP}(m,n)$ is that of finding shortest paths between all pairs of vertices in a non-negatively weighted graph with $m$ edges and $n$ vertices in the form of $P_v$ for each $v \in V$ (see Definition~\ref{def:ordpath}).
\end{definition}
\begin{theorem}
\label{thm:complexity}
Let $T_{\mathit{SSSP}}$ denote the complexity of the single-source shortest path problem on non-negatively weighted graphs with $m$ edges and $n$ vertices. Then the complexity of $\mathit{SAPSP}$ is at most $\mathcal{O}(nT_{\mathit{SSSP}})$.
\end{theorem}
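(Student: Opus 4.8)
The plan is to observe that Algorithm~\ref{alg:apsp}, together with the correctness established in Lemmas~\ref{lem:preservation} and~\ref{lem:kthpath}, already solves $\mathit{SAPSP}$, and then to show that its running time collapses to $\mathcal{O}(nT_{\mathit{SSSP}})$ once $\psi$ is instantiated with an optimal SSSP algorithm and the trivial lower bound on SSSP is taken into account.

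First I would verify that the output of Algorithm~\ref{alg:apsp} is exactly the sorted lists $P_v$ demanded by Definition~\ref{def:sapsp} (via Definition~\ref{def:ordpath}). The algorithm runs $n-1$ phases, and in phase $k$ it appends to each $S_v$ the pair corresponding to the $k$-th shortest incoming path of $v$; by Lemma~\ref{lem:kthpath} this pair is computed correctly, and because each phase always selects the currently best (i.e.\ shortest) not-yet-recorded path, the appended pairs are produced in nondecreasing order of length. Hence after the final phase each $S_v$, with its $(v,0)$ head and $(null,\infty)$ sentinel removed, coincides with $P_v$, so Algorithm~\ref{alg:apsp} is a valid solver for $\mathit{SAPSP}$.

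Next I would bound the running time. Taking $\psi$ to be an optimal SSSP algorithm, each call runs on $G'$, which has $n'=n+1$ vertices and $m' \le m^* + n \le m + n$ edges. Since $G$ is strongly connected we have $m \ge n$, so $m + n = \mathcal{O}(m)$, and assuming the complexity of SSSP is monotone non-decreasing in the number of edges, a single call costs $\mathcal{O}(T_{\mathit{SSSP}})$ with $T_{\mathit{SSSP}} = T_{\mathit{SSSP}}(m,n)$. Over the $n-1$ phases this accounts for $\mathcal{O}(nT_{\mathit{SSSP}})$, while the reloading and propagation bookkeeping was already shown to take $\mathcal{O}(mn)$.

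The one step requiring genuine care is showing that the $\mathcal{O}(mn)$ overhead is absorbed into $\mathcal{O}(nT_{\mathit{SSSP}})$. Here I would invoke the elementary observation that any correct SSSP algorithm must at least read its input, giving $T_{\mathit{SSSP}} = \Omega(m+n) = \Omega(m)$; therefore $mn = \mathcal{O}(nT_{\mathit{SSSP}})$ and the total is $\mathcal{O}(nT_{\mathit{SSSP}})$, as claimed. I expect this domination argument to be the main obstacle, since it is the only point where we must reason about the intrinsic complexity $T_{\mathit{SSSP}}$ of the problem rather than about the behaviour of a single fixed algorithm $\psi$.
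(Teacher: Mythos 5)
Your proposal is correct and follows essentially the same route as the paper: instantiate Algorithm~\ref{alg:apsp} with an (optimal) SSSP algorithm, note that the lists $S_v$ are produced in sorted order so the output is a valid $\mathit{SAPSP}$ solution, and bound the total cost by $\mathcal{O}(nT_{\mathit{SSSP}})$. The only difference is that you make explicit two steps the paper leaves implicit --- monotonicity of $T_{\mathit{SSSP}}$ in $m$ and the trivial bound $T_{\mathit{SSSP}} = \Omega(m)$ needed to absorb the $\mathcal{O}(mn)$ bookkeeping overhead --- which is a welcome tightening rather than a deviation.
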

\begin{proof}
Given an algorithm $\psi$ which solves SSSP, we can construct a solution to SAPSP in time $\mathcal{O}( nT_\psi(m,n))$ according to Algorithm~\ref{alg:apsp}, since the lists $S_v$ found by the algorithm are ordered by increasing distance from the source. \qed
\end{proof}
What Theorem~\ref{thm:complexity} says is that when solving APSP, either we can follow in the footsteps of Dijkstra and visit vertices in increasing distance from the source without worrying about a sorting bottleneck, or that if such a sorting bottleneck exists, then it proves a non-trivial lower bound for the single-source case.
\section{Improving the time bound}
\label{sec:improv}
The algorithm presented in the previous section has a running time of $\mathcal{O}(mn + nT_\psi(m^*,n))$. We show how to bring this down to $\mathcal{O}(m^*n + m \lg n + nT_\psi(m^*,n))$. We sort each set of incoming edges $E_v = \bigcup_{(u,v) \in E} \{(u,v)\}$ by edge lengths in non-decreasing order. By using any off-the-shelf sorting algorithm, this takes $\mathcal{O}(m \lg n)$ time.

We only keep pointers $p[(u,v)]$ for the edges which are shortest paths between $u$ and $v$, and up to one additional edge per vertex for which we do not know whether it is part of a shortest path. Since edges are sorted by their lengths, a vertex $v$ can ignore an edge at position $t$ in the sorted list $E_v$ until the edge at position $t-1$ is either found to be a shortest path, or found \textit{not} to be a shortest path. For some edge $(u,v)$ the former case simply corresponds to using the first element, i.e. $u$, provided by $p[(u,v)]$ as a shortest path. The latter case on the other hand, is \textit{not} using the first element offered by $p[(u,v)]$, i.e. finding it is not viable during the reloading phase. Whenever one of these two conditions is met, we include the next edge in the sorted list as a pointer, and either throw away the previous edge if it was found not to be a shortest path, or keep it otherwise. This means the total amount of pointers is at most $m^* + n$ at any given time, which is $\mathcal{O}(m^*)$, since $m^*$ is at least $n$. The total amount of time spent by the algorithm then becomes $\mathcal{O}(m^*n + m \lg n + nT_\psi(m^*,n))$.
\begin{theorem}
\label{thm:apsp}
Let $\psi$ be an algorithm which solves the single-source shortest path problem on non-negatively weighted graphs. Then, the all-pairs shortest path problem on non-negatively weighted graphs can be solved in time $\mathcal{O}(m^*n + m \lg n + nT_\psi(m^*,n))$ and space $\mathcal{O}(n^2 + S_\psi(m^*,n))$ where $T_\psi(m,n)$ is the time required by algorithm $\psi$ on a graph with $m$ edges and $n$ nodes and $S_\psi(m,n)$ is the space required by algorithm $\psi$ on the same graph.
\end{theorem}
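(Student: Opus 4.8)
The plan is to assemble the theorem from the correctness argument already in place together with a refined cost accounting for the improved pointer scheme. Correctness is immediate: Lemma~\ref{lem:kthpath}, resting on Lemma~\ref{lem:preservation}, shows that the $k$-th invocation of $\psi(V',E',i)$ correctly computes the $k$-th shortest incoming path for every vertex given only the first $k-1$ paths of its neighbors. Since the main loop runs for $k=1,\dots,n-1$ and each $S_v$ has at most $n-1$ inner elements, a straightforward induction fills every list $S_v$ with the correct shortest paths in nondecreasing order of length. The improvement of Section~\ref{sec:improv} only changes which edges carry live pointers during reloading; because it never suppresses an edge that could furnish a viable pair before that edge is actually needed, correctness is preserved. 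Thus what remains is purely the complexity analysis, which I would split into three contributions: sorting, reloading, and the $n-1$ invocations of $\psi$.

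For the time bound, the one-time sort of each incoming edge set $E_v$ by length costs $\mathcal{O}(m \lg n)$ in total. The crux---and the step I expect to require the most care---is bounding the reloading work at $\mathcal{O}(m^* n)$ rather than the $\mathcal{O}(mn)$ of the basic algorithm. Here I would argue two points. First, the number of live pointers at any moment is $\mathcal{O}(m^*)$: a vertex $v$ keeps a pointer $p[(u,v)]$ only for edges already known to lie on a shortest path, of which there are at most $m^*$ over the whole graph, plus at most one \emph{frontier} edge per vertex whose status is as yet undecided, contributing $\mathcal{O}(n) = \mathcal{O}(m^*)$ since strong connectivity forces $m^* \geq n$. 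Second, each live pointer advances monotonically through a list of length at most $n+1$, so its total displacement over all $n-1$ phases is $\mathcal{O}(n)$; summed over the $\mathcal{O}(m^*)$ pointers this gives $\mathcal{O}(m^* n)$. Scanning the live pointers once per phase adds a further $\mathcal{O}(m^* n)$, so reloading costs $\mathcal{O}(m^* n)$ altogether.

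For the calls to $\psi$, I would verify the invariant $m' = |E'| \leq m^* + n$ maintained by the algorithm: the graph $G'$ begins with the $n$ edges $(i,v)$, and the only edges ever inserted are copies of edges $(u,v)$ confirmed to lie on a shortest path, of which there are at most $m^*$. Hence $m' = \mathcal{O}(m^*)$ and $n' = n+1 = \mathcal{O}(n)$, so each of the $n-1$ invocations runs in $T_\psi(m^*, n)$ and the total is $n\,T_\psi(m^*, n)$. Combining the three contributions yields the claimed time $\mathcal{O}(m^* n + m \lg n + n\,T_\psi(m^*, n))$, where the $m^*n$ term is absorbed but retained for clarity.

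Finally, for space I would account for the $n$ lists $S_v$, each of length at most $n+1$, giving $\Theta(n^2)$; the $\mathcal{O}(m^*)$ live pointers and the sorted edge lists of total size $\mathcal{O}(m) \subseteq \mathcal{O}(n^2)$; and the working space $S_\psi(m^*, n)$ demanded by $\psi$ when run on $G'$, whose size is bounded as above. Summing these gives $\mathcal{O}(n^2 + S_\psi(m^*, n))$, which completes the theorem. I anticipate no further obstacle beyond the reloading accounting, as every other quantity is a direct consequence of the invariants already established for Algorithm~\ref{alg:apsp} and its refinement.
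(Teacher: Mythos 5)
Your proposal is correct and follows essentially the same route as the paper, whose proof of Theorem~\ref{thm:apsp} simply defers to the surrounding discussion: correctness via Lemmata~\ref{lem:preservation} and~\ref{lem:kthpath}, the $\mathcal{O}(m\lg n)$ presorting of each $E_v$, the $m^*+n$ bound on live pointers (and on $|E'|$) yielding $\mathcal{O}(m^*n)$ reloading cost and $nT_\psi(m^*,n)$ for the SSSP calls, and the $\mathcal{O}(n^2+S_\psi(m^*,n))$ space count. Your write-up is, if anything, slightly more explicit than the paper's own about why deferring edges in the sorted lists preserves correctness.
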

\begin{proof}
See discussion above and in Section~\ref{sec:algo}. \qed
\end{proof}
\section{Directed acyclic graphs}
\label{sec:dags}
A combination of a few techniques yields an $\mathcal{O}(m^*n + m \lg n)$ APSP algorithm for arbitrarily weighted directed acyclic graphs (DAGs). The first step is to transform the original (possibly negatively-weighted) graph into a non-negatively weighted graph through Johnson's~\cite{johnson} reweighting technique. Instead of using Bellman-Ford in the Johnson step, we visit nodes in their topological order, thus obtaining a non-negatively weighted graph in $\mathcal{O}(m)$ time. Next, we use the improved time bound algorithm as presented in Section~\ref{sec:improv}. For the SSSP algorithm, we again visit nodes according to their topological order. Note that if the graph $G$ is a DAG then $G'$ is also a DAG. The reasoning is simple: the only new edges introduced in $G'$ are those from $i$ to each vertex $v \in V$. But since $i$ has no incoming edges, the acyclic property of the graph is preserved. The time bounds become $\mathcal{O}(m)$ for Johnson's step and $\mathcal{O}(m^*n + m \lg n + nT_\psi(m^*,n))$ for the APSP algorithm where $T_\psi(m^*,n) = \mathcal{O}(m^*)$. Thus, the combined asymptotic running time is $\mathcal{O}(m^*n + m \lg n)$. The asymptotic space bound is simply $\Theta(n^2)$.
\begin{theorem}
\label{thm:dag}
All-pairs shortest path on directed acyclic graphs can be solved in time $\mathcal{O}(m^*n + m \lg n)$ and $\Theta(n^2)$ space.
\end{theorem}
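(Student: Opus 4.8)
The plan is to combine three already-established ingredients and verify that they compose without unexpected loss. First I would invoke Johnson's reweighting technique to eliminate negative edge weights. The standard Johnson step computes vertex potentials $h(v)$ by running a single-source shortest path computation from an auxiliary source connected to every vertex by zero-length edges; ordinarily this requires Bellman--Ford at cost $\mathcal{O}(mn)$. The key observation I would make is that on a DAG one can compute these potentials by a single pass over the vertices in topological order, relaxing each edge exactly once, which runs in $\mathcal{O}(m)$ time. I would then argue that the reweighted lengths $\ell'(u,v) = \ell(u,v) + h(u) - h(v)$ are non-negative and preserve shortest paths, so that recovering true distances from reweighted distances is a routine additive correction.

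Next I would apply Theorem~\ref{thm:apsp} to the reweighted (now non-negatively weighted) graph, using as the black-box SSSP algorithm $\psi$ the linear-time topological-order relaxation for DAGs. The crucial structural fact to establish is that the constructed graph $G'$ remains a DAG: the only edges added to $G$ are those emanating from the auxiliary vertex $i$, and since $i$ has no incoming edges it cannot participate in any cycle, so acyclicity is preserved. This justifies that topological-order SSSP is applicable to $G'$ throughout the execution, giving $T_\psi(m^*, n) = \mathcal{O}(m^*)$.

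Finally I would substitute this bound into the running time of Theorem~\ref{thm:apsp}. The term $n\,T_\psi(m^*,n)$ becomes $\mathcal{O}(m^* n)$, which is absorbed into the existing $m^* n$ term, leaving $\mathcal{O}(m^* n + m \lg n)$. Adding the $\mathcal{O}(m)$ Johnson step does not change this bound, and the space bound collapses to $\Theta(n^2)$ since the DAG SSSP subroutine needs only $\mathcal{O}(m^* + n)$ auxiliary space, dominated by the distance matrix.

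I expect the main obstacle to be verifying that Johnson's reweighting interacts cleanly with the sorted-list machinery of Section~\ref{sec:improv} rather than any single deep argument. In particular, Theorem~\ref{thm:apsp} assumes non-negative weights and sorts incoming edge sets $E_v$ by length; after reweighting the edge lengths change, so I must confirm that the sorting and the pointer-advancement invariants are applied to the reweighted lengths and that the final distances are obtained by the inverse potential correction $d(u,v) = d'(u,v) - h(u) + h(v)$. Once this bookkeeping is checked, the asymptotic claims follow immediately by the substitutions above.
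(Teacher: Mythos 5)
Your proposal matches the paper's argument essentially verbatim: Johnson's reweighting computed in $\mathcal{O}(m)$ via a topological-order pass, followed by the improved algorithm of Section~\ref{sec:improv} with topological-order SSSP as $\psi$, justified by the observation that $G'$ stays acyclic because $i$ has no incoming edges, yielding $T_\psi(m^*,n)=\mathcal{O}(m^*)$ and the claimed bounds. Your closing remark about checking that the sorted-list machinery operates on the reweighted lengths is a reasonable extra verification, but it is routine and does not change the route.
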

\begin{proof}
See discussion above. \qed
\end{proof}
\section{Discussion}
In this paper we have shown that the ``standard'' approach to solving APSP via independent SSSP computations can be improved upon even if we know virtually nothing about the SSSP algorithm itself. However, we should mention that in recent years, asymptotically efficient algorithms for APSP have been formulated in the so-called component hierarchy framework. These algorithms can be seen as computing either SSSP or APSP. Our algorithm is only capable of speeding up SSSP hierarchy algorithms, such as Thorup's~\cite{thorup_sssp}, but not those which reuse the hierarchy, such as Pettie's~\cite{pettie_apsp_real}, Pettie-Ramachandran~\cite{pettie_rama} or Hagerup's~\cite{hagerup} since our SSSP reduction requires modifications to the graph $G'$. These modifications would require the hierarchy to be recomputed, making the algorithms prohibitively slow. This raises the following question: is there a way to avoid recomputing the hierarchy at each step, while keeping the number of edges in the hierarchy $\mathcal{O}(m^*)$?

Further, if there exists an $o(mn)$ algorithm for the arbitrarily-weighted SSSP problem, then by using Johnson's reweighting technique, our algorithm might become an attractive solution for that case. For the general case, no such algorithms are known, but for certain types of graphs, there exist algorithms with an $o(mn)$ asymptotic time bound~\cite{goldberg,gabowtarscaling}.

Furthermore, we can generalize the approach used on DAGs. Namely, in Algorithm~\ref{alg:apsp} we can use an SSSP algorithm $\psi$ that works on a specialized graph $G$, as long our constructed graph $G'$ has these properties. Therefore, our algorithm can be applied to undirected graphs, integer-weighted graphs, etc., but it cannot be applied, for example, to planar graphs, since $G'$ is not necessarily planar.

Finally, we have shown a connection between the sorted all-pairs shortest path problem and the single-source shortest path problem. If a meaningful lower bound can be proven for SAPSP, then this would imply a non-trivial lower bound for SSSP. Alternatively, if SAPSP can be solved in $O(mn)$ time, then this implies a Dijkstra-like algorithm for APSP, which visits vertices in increasing distance from the source.
\bibliographystyle{splncs}
\bibliography{biblio}

\end{document}